\pgfplotsset{compat=newest}
\newtheorem{theorem}{Theorem}[section]
\newcommand{\eg}{\textit{e.g.}\xspace}
\newcommand{\ie}{\textit{i.e.}\xspace}
\newcommand{\Reals}{\mathbb{R}}
\newcommand{\N}{\mathcal{N}}
\newcommand{\GP}{\mathcal{GP}}
\newcommand{\weight}{w}
\newcommand{\Prb}[1]{\mathbb{P}({#1})}
\newcommand{\Transp}{^\mathsf{T}}
\newcommand{\ii}{{(i)}}
\newcommand{\ji}{{(j)}}
\newcommand{\anc}{a}
\newcommand{\epdf}[1]{\mathcal{N}({#1}\mid \vect{y}_t, \mat{R})}
\newcommand{\qpdf}[2]{\mathcal{N}({#1}\mid {#2},\mat{Q})}
\newcommand{\qdist}[1]{\mathcal{N}({#1},\mat{Q})}
\newcommand{\TO}{{\bf to }}
\newcommand{\imag}{\mathrm{i}}
\newcommand{\diff}{\,\mathrm{d}}
\newcommand{\vect}[1]{\mathbf{#1}}
\newcommand{\vectb}[1]{\bm{#1}}
\newcommand{\mat}[1]{\mathbf{#1}}
\newcommand{\matb}[1]{\bm{#1}}
\newcommand{\coverTitle}{\textbf{Computationally Efficient Bayesian Learning of Gaussian Process State Space Models}}
\newcommand{\coverAuthors}{Andreas Svensson, Arno Solin, Simo S\"{a}rkk\"{a} and Thomas B. Sch\"{o}n}
\title{Computationally Efficient Bayesian Learning \\of Gaussian Process State Space Models}
\author{Andreas Svensson\\Uppsala University \and Arno Solin\\Aalto University \and Simo S\"{a}rkk\"{a}\\Aalto University \and Thomas B. Sch\"{o}n\\Uppsala University}
\date{}
\begin{document}

\begin{titlepage}
	\begin{center}
		{\large \em Technical report}
		\vspace*{2.5cm}

		{\huge \bfseries \coverTitle  \\[0.4cm]}
		
		{\Large \coverAuthors \\[2cm]}
		
		\renewcommand\labelitemi{\color{red}\large$\bullet$}
		\begin{itemize}
			\item {\Large \textbf{Please cite this version:}} \\[0.4cm]
			\large
			\coverAuthors. \coverTitle. In \textit{Proceedings of the
				$\mathit{19}$\textsuperscript{th} International Conference on 
				Artificial Intelligence and Statistics (AISTATS)},
			Cadiz, Spain, 2016.
			
			{\tiny
			\begin{verbatim}
			@InProceedings{SvenssonSSS2016,
			Title     = {Computationally efficient {B}ayesian learning of {G}aussian process state space models},
			Author    = {Svensson, Andreas and Solin, Arno and S{\"a}rkk{\"a}, Simo and Sch\"{o}n, Thomas B.},
			Booktitle = {Proceedings of 19\textsuperscript{th} International Conference on Artificial Intelligence and Statistics (AISTATS)},
			Year      = {2016},
			Address   = {Cadiz, Spain},
			Month     = {May},
			}
			
			\end{verbatim}
		}
			
		\end{itemize}
		\vfill
		
\begin{abstract}
	Gaussian processes allow for flexible specification of prior assumptions of unknown dynamics in state space models. We present a procedure for efficient Bayesian learning in Gaussian process state space models, where the representation is formed by projecting the problem onto a set of approximate eigenfunctions derived from the prior covariance structure. Learning under this family of models can be conducted using a carefully crafted particle MCMC algorithm. This scheme is computationally  efficient and yet allows for a fully Bayesian treatment of the problem. Compared to conventional system identification tools or existing learning methods, we show competitive performance and reliable quantification of uncertainties in the model.
\end{abstract}

		\vfill
	\end{center}
\end{titlepage}

\maketitle

\begin{abstract}
  Gaussian processes allow for flexible specification of prior assumptions of unknown dynamics in state space models. We present a procedure for efficient Bayesian learning in Gaussian process state space models, where the representation is formed by projecting the problem onto a set of approximate eigenfunctions derived from the prior covariance structure. Learning under this family of models can be conducted using a carefully crafted particle MCMC algorithm. This scheme is computationally  efficient and yet allows for a fully Bayesian treatment of the problem. Compared to conventional system identification tools or existing learning methods, we show competitive performance and reliable quantification of uncertainties in the model.
\end{abstract}

\section{INTRODUCTION}

Gaussian processes (GPs, \citealt{RW:2006}) have been proven to be powerful probabilistic non-parametric modeling tools for \emph{static} nonlinear functions. However, many real-world applications, such as control, target tracking, and time-series analysis are tackling problems with nonlinear \emph{dynamical} behavior. The use of GPs in modeling nonlinear dynamical systems is an emerging topic, with many strong contributions during the recent years, for example the work by \citet{TDR:2010}, \cite{FLS+:2013, FLS+:2014, FCR:2014} and \cite{MDD+:2015}. The aim of this paper is to advance the state-of-the-art in Bayesian inference on Gaussian process state space models (GP-SSMs). As we will detail, a GP-SSM is a state space model, using a GP as its state transition function. Thus, the GP-SSM is not a GP itself, but a state space model (\ie, a dynamical system). Overviews of GP-SSMs are given by, \eg, \citet{McHutchon:2014} and \citet{Frigola:2015}.

We provide a novel reduced-rank model formulation of the GP-SSM with good convergence properties both in theory and practice. The advantage with our approach over the variational approach by \cite{FCR:2014}, as well as other inducing-point-based approaches, is that our approach attempts to approximate the optimal Karhunen--Loeve eigenbasis for the reduced-rank approximation instead of using the sub-optimal Nystr\"om approximation which implicitly is the underlying approximation in all inducing point methods. Because of this we do not need to resort to variational approximations, but we can instead perform the Bayesian computations in full. By utilizing the structure of the reduced-rank model, we construct a computationally efficient linear-time-complexity MCMC-based algorithm for learning in the proposed GP-SSM model, which we  demonstrate and evaluate on several challenging examples. We also provide a proof of convergence of the reduced-rank GP-SSM to a full GP-SSM (in the supplementary material).

\begin{figure*}[t!]
	\centering
	\begin{subfigure}[b]{0.49\textwidth}
		\centering%
		\includegraphics{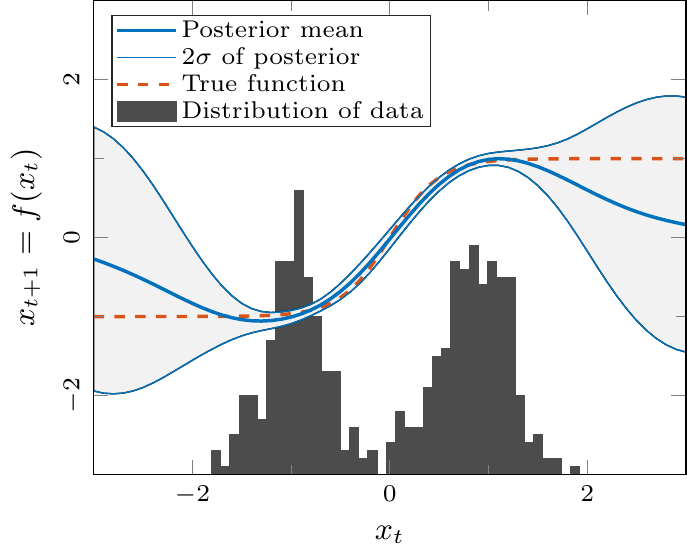}
		\caption{The learned model}
		\label{fig:simple-posterior}
	\end{subfigure}
	\hspace*{\fill}
	\begin{subfigure}[b]{0.49\textwidth}
		\centering%
		\includegraphics{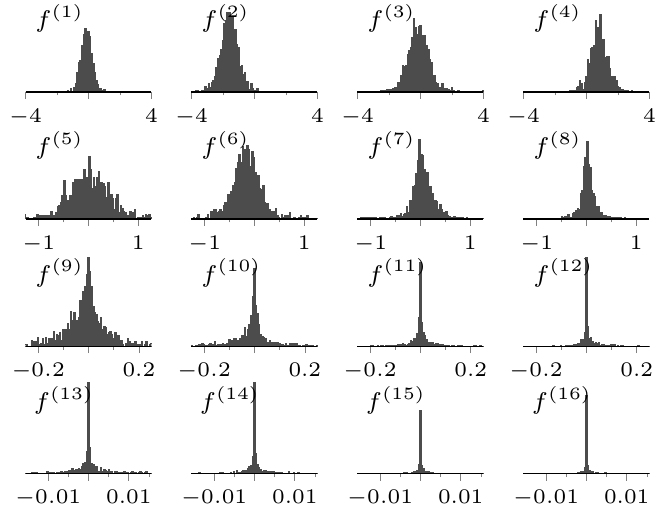}
		\vspace*{12pt}
		\caption{The posterior weights $f^{(i)}$}
		\label{fig:simple-histogram}
	\end{subfigure}
	\hspace*{\fill}
	\caption{An example illustrating how the GP-SSMs handle uncertainty. (a)~The learned model from data $y_{1:T}$. The bars show where the data is located in the state space, \ie, what part of the model is excited in the data set, affecting the posterior uncertainty in the learned model. (b)~Our approach relies on a basis function expansion of $f$, and learning $f$ amounts to finding the posterior distribution of the weights $f^{(i)}$ depicted by the histograms.}
	\label{fig:simple}
\end{figure*}

GP-SSMs are a general class of models defining a dynamical system for $t = 1, 2, \dots, T$ given by
\begin{subequations}\label{eq:ssm}
  \begin{align}
    \vect{x}_{t+1} &= \vect{f}(\vect{x}_t) + \vect{w}_t, \nonumber \\
    &\quad\quad\text{with}~~ \vect{f}(\vect{x}) 
    \sim \GP(\vect{0},\vectb{\kappa}_{\theta,f}(\vect{x},\vect{x}')), \label{eq:ssmx}\\
    \vect{y}_t     &= \vect{g}(\vect{x}_t) + \vect{e}_t, \nonumber \\
    &\quad\quad\text{with}~~ \vect{g}(\vect{x}) 
    \sim \GP(\vect{0},\vectb{\kappa}_{\theta,g}(\vect{x},\vect{x}')), \label{eq:ssmy}
  \end{align}
\end{subequations}
where the noise terms $\vect{w}_t$ and $\vect{e}_t$ are i.i.d.\ Gaussian, $\vect{w}_t \sim \N(\vect{0},\mat{Q})$ and $\vect{e}_t \sim \N(\vect{0},\mat{R})$. The latent state $\vect{x}_t \in \Reals^{n_x}$ is observed via the measurements $\vect{y}_t \in \Reals^{n_y}$. The key feature of this model is the nonlinear transformations $\vect{f} : \Reals^{n_x} \to \Reals^{n_x}$ and $\vect{g} : \Reals^{n_x} \to \Reals^{n_y}$ which are not known explicitly and do not adhere to any specific parametrization. The model functions $\vect{f}$ and $\vect{g}$ are assumed to be realizations from a Gaussian process prior over $\Reals^{n_x}$ with a given covariance function $\vectb{\kappa}_\theta(\vect{x},\vect{x}')$ subject to some hyperparameters $\vectb{\theta}$. Learning of this model, which we will tackle, amounts to inferring the posterior distribution of $\vect{f}$, $\vect{g}$, $\vect{Q}$, $\vect{R}$, and $\vectb{\theta}$ given a set of (noisy) observations $\vect{y}_{1:T} \triangleq \{\vect{y}_i\}_{i=1}^T$.

The strength of including the GP in \eqref{eq:ssm} is its ability to systematically model \emph{uncertainty}---not only uncertainty originating from stochastic noise within the system, but also uncertainty inherited from data, such as few measurements or poor excitation of the dynamics in certain regions of the state space. An example of this is given by Figure~\ref{fig:simple}, where we learn the posterior \emph{distribution} of the unknown function $\vect{f}(\cdot)$ in a GP-SSM (see Sec.~\ref{sec:examples} for details). An inspiring real-world example on how such probabilistic information can be utilized for simultaneous learning and control is given by \cite{DFR:2015}.

Non-probabilistic methods for modeling nonlinear dynamical systems include learning of state space models using a basis function expansion \citep{GR:1999}, but also nonlinear extensions of AR(MA) and GARCH models from the time-series analysis literature \citep{Tsay:2010}, as well as nonlinear extensions of ARX and state space models from the system identification literature \citep{SZL+:1995,Ljung:1999}. In particular, nonlinear ARX models are now a standard tool for the system identification engineer \citep{Mathworks:2015}. For probabilistic modeling, the latent force model \citep{ALL:2009} presents one approach for modeling dynamical phenomena using GPs by encoding \textit{a~priori} known dynamics within the construction of the GP. Another approach is the Gaussian process dynamical model \citep{WFH:2008}, where a GP is used to model the nonlinear function within an SSM, that is, a GP-SSM. 
However, the work by \citet{WFH:2008} is, as opposed to this paper, mostly focused around the problem setting when $n_y \gg n_x$. That is also the focus for the further development by \citet{DTL:2011}, where the EM algorithm for learning is replaced by a variational approach.

State space filtering and smoothing in GP-SSMs has been tackled before (\eg, \citealt{DTH+:2012, DM:2012}), and recent interest has been in learning GP-SSMs  \citep{TDR:2010, FLS+:2013, FLS+:2014, FCR:2014}. An inherent problem in learning the GP-SSM is the entangled relationship between the states $\vect{x}_t$ and the nonlinear function $\vect{f}(\cdot)$. Two different approaches have been proposed in the literature: In the first approach the GP is represented by a parametrized form (\citeauthor{TDR:2010} use a pseudo-training data set, akin to the inducing inputs by \citealt{FCR:2014}, whereas we will employ a basis function expansion). The second approach (used by \citealt{FLS+:2013, FLS+:2014}) is handling the nonlinear function implicitly by marginalizing it out. Concerning learning, \cite{TDR:2010} and \cite{FLS+:2014} use an EM-based procedure, whereas we and \cite{FLS+:2013} use an MCMC algorithm.

The main bottleneck prohibiting the use in practice of some of the previously proposed GP-SSMs methods is the computational load. For example, the training of a one-dimensional system using $T = 500$ data points (\ie, a fairly small example) is in the magnitude of several hours for the solution by \cite{FLS+:2013}. Akin to \cite{FCR:2014}, our proposed method will typically handle such an example within minutes, or even less. To reduce the computational load, \cite{FCR:2014} suggests variational sparse GP techniques to approximate the solution. Our approach, however, is using the reduced-rank GP approximation by \cite{SS:2014}, which is a disparate solution with different properties. The reduced-rank GP approximation enjoys favorable theoretical properties, and we can prove convergence to a non-approximated GP-SSM.

The outline of the paper is as follows: In Section~\ref{sec:rrgpssm} we will introduce reduced-rank Gaussian process state space models by making use of the representation of GPs via basis functions corresponding to the prior covariance structure \citep{SS:2014}, a theoretically well-supported approximation significantly reducing the computational load. In Section~\ref{sec:learning} we will develop an algorithm for learning reduced-rank Gaussian process state space models by using recent MCMC methods \citep{LJS:2014, WSL+:2012}. We will also demonstrate it on synthetic as well as real data examples in Section~\ref{sec:examples}, and finally discuss the contribution and further extensions in Section~\ref{sec:discussion}.

\pagebreak

\section{REDUCED-RANK GP-SSMs}
\label{sec:rrgpssm}

We use GPs as flexible priors in Bayesian learning of the state space model. The covariance function $\kappa(\vect{x},\vect{x}')$ encodes the prior assumptions of the model functions, thus representing the best belief of the behavior of the nonlinear transformations. In the following we present an approach for parametrizing this model in terms of an $m$-rank truncation of a basis function expansion as presented by \citet{SS:2014}. Related ideas have also been proposed by, for example, \citet{LQR+:2010}.

Provided that the covariance function is stationary (homogeneous, \ie\ $\kappa(\vect{x}-\vect{x}') \triangleq \kappa(\vect{x},\vect{x}')$), the covariance function can be equivalently represented in terms of the spectral density $S(\vectb{\omega})$. This Fourier duality is known as the \emph{Wiener--Khintchin theorem}, which we parametrize as: $S(\vectb{\omega}) = \int \kappa(\vect{r}) \, \exp({-\imag \, \vectb{\omega}\Transp \vect{r}}) \diff \vect{r}$. We employ the relation presented by \citet{SS:2014} to approximate the covariance operator corresponding to $\kappa(\cdot)$. This operator is a pseudo-differential operator, which we approximate by a series of differential operators, namely Laplace operators $\nabla^2$. In the isotropic case, the approximation of the covariance function is given most concisely in the following form:
\begin{equation} \label{eq:approx}
  \kappa_\theta(\vect{x},\vect{x}') \approx \sum_{j=1}^m S_\theta(\lambda_j) \, \phi^{(j)}(\vect{x})\,\phi^{(j)}(\vect{x}'),
\end{equation}
where $S_\theta(\cdot)$ is the spectral density function of $\kappa_\theta(\cdot)$, and $\lambda_j$ and $\phi^{(j)}$ are the Laplace operator eigenvalues and eigenfunctions solved for the domain $\Omega \ni \vect{x}$. See \cite{SS:2014} for a detailed derivation and convergence proofs.

The key feature in the Hilbert space approximation \eqref{eq:approx} is that $\lambda_j$ and $\phi^{(j)}$ are independent of the hyperparameters $\vectb{\theta}$, and it is only the spectral density that depends on $\vectb{\theta}$. Equation~\eqref{eq:approx} is a direct approximation of the eigendecomposition of the Gram matrix (\eg, \citealt{RW:2006}), and it can be interpreted as an optimal parametric expansion with respect to the given covariance function in the GP prior.

In terms of a basis function expansion, this can be expressed as
\begin{align}\label{eq:f_prior}
  f(\vect{x}) \sim \GP(0,\kappa(\vect{x},\vect{x}'))
  &\quad \Leftrightarrow \quad 
  f(\vect{x}) \approx \sum_{j=1}^m f^{(j)} \phi^{(j)}(\vect{x}), 
\end{align}
where $f^{(j)} \sim \N(0,S(\lambda_j))$. In the case $n_x>1$, this formulation does allow for non-zero covariance between different components of the state space.
We can now formulate a reduced-rank GP-SSM, corresponding to \eqref{eq:ssmx}, as
\begin{equation}\label{eq:rrgpss-expanded}
	\vect{x}_{t+1} =
	\underbrace{\begin{bmatrix} f_{1}^{(1)} & \hdots & f_{1}^{(m)} \\ \vdots & & \vdots \\ f_{n_x}^{(1)} & \hdots & f_{n_x}^{(m)}  \end{bmatrix}}_{\mat{A}}
	\underbrace{\begin{bmatrix} \phi^{(1)}(\vect{x}_t) \vphantom{f_{1}^{(m)}} \\ \vdots \\ \phi^{(m)}(\vect{x}_t) \vphantom{f_{n_x}^{(m)}} \end{bmatrix}}_{\matb{\Phi}(\vect{x}_t)} + 
	\vect{w}_t,
\end{equation}
and similarly for \eqref{eq:ssmy}. Henceforth we will consider a reduced-rank GP-SSM,
\begin{subequations}\label{eq:rrgpssm}
	\begin{align}
		\vect{x}_{t+1} &= \mat{A}\Phi(\vect{x}_t) + \vect{w}_t, \\
		\vect{y}_{t} &= \mat{C}\Phi(\vect{x}_t) + \vect{e}_t,
	\end{align}
\end{subequations}
where $\mat{A}$ and $\mat{C}$ are matrices of weights with priors for each element as described by \eqref{eq:f_prior}.

\section{LEARNING GP-SSMs}
\label{sec:learning}
\begin{algorithm*}[t!]
	\caption{Learning of reduced-rank GP-SSMs.}
	\begin{algorithmic}[1]
		\Require Data $y_{1:T}$, priors on $\mat{A}$, $\mat{Q}$ and $\vectb{\theta}$.
		\Ensure $K$ MCMC-samples with $p(\vect{x}_{1:T},\mat{Q},\mat{A},\vectb{\theta}\mid\vect{y}_{1:T})$ as invariant distribution.
		\State Sample initial $\vect{x}_{1:T}[0],\mat{Q}[0],\mat{A}[0],\vectb{\theta}[0]$.
		\For{$k = 0$ to $K$}
		\State\label{alg:gibbs:x}Sample ${\vect{x}_{1:T}[k+1]}\boldsymbol{\:\big\vert\:} \mathrlap{\mat{Q}[k], \mat{A}[k], \vectb{\theta}[k]}$\phantom{$\vect{x}_{1:T}[k+1], \mat{Q}[k+1], \mat{A}[k+1]$} by Algorithm~\ref{alg:CPFAS}.
		\State\label{alg:gibbs:q}Sample \phantom{$\vect{x}_{1:T}[k+1]$}$\mathllap{\mat{Q}[k+1]}\boldsymbol{\:\big\vert\:} \mathrlap{\mat{A}[k], \vectb{\theta}[k], \vect{x}_{1:T}[k+1]}$\phantom{$\vect{x}_{1:T}[k+1], \mat{Q}[k+1], \mat{A}[k+1]$} according to \eqref{eq:Qdist}.
		\State\label{alg:gibbs:a}Sample \phantom{$\vect{x}_{1:T}[k+1]$}$\mathllap{\mat{A}[k+1]}\boldsymbol{\:\big\vert\:} \mathrlap{\vectb{\theta}[k], \vect{x}_{1:T}[k+1], \mat{Q}[k+1]}$\phantom{$\vect{x}_{1:T}[k+1], \mat{Q}[k+1], \mat{A}[k+1]$} according to \eqref{eq:Adist}.
		\State\label{alg:gibbs:eta}Sample \phantom{$\vect{x}_{1:T}[k+1]$}$\mathllap{\vectb{\theta}[k+1]}\boldsymbol{\:\big\vert\:} \vect{x}_{1:T}[k+1], \mat{Q}[k+1], \mat{A}[k+1]$ by using MH (Section~\ref{sec:sample_hyperparams}).
		\EndFor
	\end{algorithmic}
	\label{alg:gibbs}
\end{algorithm*}
Learning in reduced-rank Gaussian process state space models \eqref{eq:rrgpssm} from $\vect{y}_{1:T}$ amounts to inferring the posterior distribution of $\mat{A}$, $\mat{C}$, $\mat{Q}$, $\mat{R}$, and the hyperparameters $\vectb{\theta}$. For clarity in the presentation, we will focus on inferring the dynamics, and assume the observation model ($\vect{g}(\cdot)$ and $\mat{R}$) to be known \textit{a~priori}. However, the extension to an unknown observation model---as well as exogenous input signals---follows in the same fashion, and will be demonstrated in the numerical examples.

To infer the sought distributions, we will use a blocked Gibbs sampler outlined in Algorithm~\ref{alg:gibbs}. Although involving sequential Monte Carlo (SMC) for inference in state space, the validity of this approach is \emph{not} relying on asymptotics ($N \to \infty$) in the SMC algorithm, thanks to recent particle MCMC methods \citep{ADH:2010,LJS:2014}.

It is possible to learn \eqref{eq:rrgpssm} under different assumptions on what is known. We will focus on the general (and in many cases realistic) setting where the distributions of $\mat{A}$, $\mat{Q}$ and $\vectb{\theta}$ are all unknown. In cases when $\mat{Q}$ or $\vectb{\theta}$ are known \textit{a~priori}, the presented scheme is straightforward to adapt. To be able to infer the posterior distribution of $\mat{Q}$ and $\vectb{\theta}$, we make the additional prior assumptions:
\begin{align} \label{eq:Qprior}
	\mat{Q} &\sim \mathcal{IW}(\ell_Q, \matb{\Lambda}_Q), &
	\vectb{\theta} &\sim p(\vectb{\theta}), 
\end{align}
where $\mathcal{IW}$ denotes the Inverse Wishart distribution. For brevity, we will omit the problem of finding the unknown initial distribution $p(\vect{x}_1)$. It is possible to treat this rigorously akin to $\vectb{\theta}$, but it is of minor importance in most practical situations. We will now in Section~\ref{sec:sample_x}--\ref{sec:sample_hyperparams} explain the four main steps~\ref{alg:gibbs:x}--\ref{alg:gibbs:eta} in Algorithm~\ref{alg:gibbs}.

\subsection{Sampling in State Space with SMC}
\label{sec:sample_x}

SMC methods \citep{DJ:2011} are a family of techniques developed around the problem of inferring the posterior state distribution in SSMs. SMC can be seen as a sequential application of importance sampling along the sequence of distributions $\ldots, p(\vect{x}_{t-1} \mid \vect{y}_{1:t-1}), p(\vect{x}_t \mid \vect{y}_{1:t}), \ldots$ with a resampling procedure to avoid sample depletion.

To sample the state space trajectory $\vect{x}_{1:T}$, conditional on a model $\mat{A}$, $\mat{Q}$ and data $\vect{y}_{1:T}$, we employ a conditional particle filter with ancestor sampling, forming a particle Gibbs Markov kernel Algorithm~\ref{alg:CPFAS} (PGAS, \citealt{LJS:2014}). PGAS can be thought of as an SMC algorithm for finding the so-called smoothing distribution $p(\vect{x}_{1:T}\mid \mat{A}, \mat{Q},\vect{y}_{1:T})$ to be used within an MCMC procedure.

\begin{algorithm}[!b]
	\caption{Particle Gibbs Markov kernel.}
	\label{alg:CPFAS}
	\begin{algorithmic}[1]\scriptsize
		\Require Trajectory $\vect{x}_{1:T}[k]$, number of particles $N$
		\Ensure Trajectory $\vect{x}_{1:T}[k+1]$
		\State Sample $\vect{x}_1^\ii \sim p(\vect{x}_1)$, for $i = 1,\dots,N-1$.
		\State Set $\vect{x}_1^N = \vect{x}_1[k]$.
		\State \textbf{For} $t = 1$ \TO $T$
		\State ~Set $w_t^\ii = p(\vect{y}_t\mid\vect{x}_t^\ii) =  \epdf{\vect{g}(\vect{x}_t^\ii)}$, for $i = 1, \dots, N$.
		\State ~\label{alg:CPFAS:resampling}Sample $\anc_t^\ii$ with $\Prb{\anc_t^\ii=j} \propto w_{t}^\ji$, for $i = 1, \dots, N-1$.
		\State ~Sample $\vect{x}_{t+1}^\ii \sim \qdist{\vect{f}(\vect{x}_t^{\anc_t^\ii})}$, for $i = 1, \dots, N-1$.
		\State ~Set $\vect{x}_{t+1}^N = \vect{x}_{t+1}[k]$.
		\State ~Sample $\anc_t^N$ with $\Prb{\anc_t^N = j} \propto$
		\Statex ~$w_{t}^\ji p(\vect{x}_{t+1}^N\mid\vect{x}_t^\ji) = w_{t}^\ji\qpdf{\vect{x}_{t+1}^N}{\vect{f}(\vect{x}_t^\ji)}$.
		\State ~Set $\vect{x}_{1:t+1}^\ii = \{\vect{x}_{1:t}^{\anc_t^\ii},\vect{x}_{t+1}^\ii\}$, for $i = 1, \dots, N$.
		\State \textbf{End for}
		\State Sample $J$ with $\Prb{J=i}\propto \weight_T^\ii$ and set $\vect{x}_{1:T}[k+1] = \vect{x}_{1:T}^J$.
	\end{algorithmic}
\end{algorithm}

\subsection{Sampling of Covariances and Weights}
\label{sec:sample_hyp}
The sampling of the weights $\mat{A}$ and the noise covariance $\mat{Q}$, conditioned on $\vect{x}_{1:T}$ and $\vectb{\theta}$, can be done exactly, by following the procedure of \citet{WSL+:2012}. With the priors \eqref{eq:f_prior} and \eqref{eq:Qprior}, the joint prior of $\mat{A}$ and $\vect{Q}$ can be written using the Matrix Normal Inverse Wishart (MNIW) distribution as
\begin{equation}\label{eq:A_prior}
  p(\mat{A},\vect{Q}) = \mathcal{MNIW}(\mat{A},\vect{Q}\mid\mat{0},\mat{V},\ell_Q,\matb{\Lambda}_Q).
\end{equation}
Details on the parametrization of the MNIW distribution we use is available in the supplementary material, and it is given by the hierarchical model $p(\mat{Q}) = \mathcal{IW}(\vect{Q}\mid\ell_Q,\matb{\Lambda}_Q)$ and $p(\mat{A}\mid\vect{Q}) =\mathcal{MN}(\mat{A}\mid\mat{0},\vect{Q},\mat{V})$. For our problem, the most important is the second argument, the inverse row covariance $\mat{V}$, a square matrix with the inverse spectral density of the covariance function as its diagonal entries:
\begin{equation}\label{eq:GP_prior}
  \mat{V} = \mathrm{diag}\left( [S^{-1}(\lambda_1) ~ \cdots ~ S^{-1}(\lambda_m) ]\right).
\end{equation}
This is how the prior from \eqref{eq:f_prior} enters the formulation. (Note that the marginal variance of each element in $\mat{A}$ is also scaled $\mat{Q}$, and thereby $\ell_Q, \mat{\Lambda}_Q$. For notational convenience, we refrain from introducing a scaling factor, but let it be absorbed into the covariance function.) With this (conjugate) prior, the posterior follows analytically by introducing the following statistics of the sampled trajectory $\vect{x}_{1:T}$:
\begin{align}
  \matb{\Phi} &= \sum_{t=1}^T \vectb{\zeta}_{t} \vectb{\zeta}_{t}\Transp, &
  \matb{\Psi} &= \sum_{t=1}^T \vectb{\zeta}_{t} \vect{z}_{t}\Transp, &
  \matb{\Sigma} &= \sum_{t=1}^T \vect{z}_{t} \vect{z}_{t}\Transp,
\end{align}
where $\vectb{\zeta}_t = \vect{x}_{t+1}$ and $\vect{z}_t = \big[\phi^{(1)}(\vect{x}_t) \dots \phi^{(m)}(\vect{x}_t) \big]\Transp$. Using the Markov property of the SSM, it is possible to write the conditional distribution for $\mat{Q}$ as \cite[Eq.~(42)]{WSL+:2012}:
\begin{multline} \label{eq:Qdist}
  p(\mat{Q}\mid \vect{x}_{1:T},\vect{y}_{1:T}) = \\ 
  \mathcal{IW}(\mat{Q}\mid T+\ell_Q,\matb{\Lambda}_Q + \left(\matb{\Phi} - \matb{\Psi}(\matb{\Sigma} + \mat{V})^{-1}\matb{\Psi}\Transp\right)).
\end{multline}
Given the prior \eqref{eq:A_prior}, $\mat{A}$ can now to be sampled from \cite[Eq.~(43)]{WSL+:2012}:
\begin{multline}\label{eq:Adist}
  p(\mat{A}\mid \mat{Q}, \vect{x}_{1:T}, \vect{y}_{1:T}) = \\ 
  \mathcal{MN}(\mat{A} \mid \matb{\Psi}(\matb{\Sigma} + \mat{V})^{-1},\mat{Q},(\matb{\Sigma} + \mat{V})^{-1}).
\end{multline}

\subsection{Marginalizing the Hyperparameters}
\label{sec:sample_hyperparams}
Concerning the sampling of the hyperparameters $\vectb{\theta}$, we note that we can easily evaluate the conditional distribution $p(\vectb{\theta} \mid \vect{x}_{1:T},\mat{Q},\mat{A})$ up to proportionality as
\begin{multline} \label{eq:hyp_post}
  p(\vectb{\theta} \mid \vect{x}_{1:T},\mat{Q},\mat{A}) \propto \\
  p(\vectb{\theta}) \, p(\mat{Q}\mid \vect{x}_{1:T},\mat{Q},\vectb{\theta}) \, p(\mat{A}\mid \vect{x}_{1:T},\mat{Q},\mat{A},\vectb{\theta}).
\end{multline}
To utilize this, we suggest to sample the hyperparameters by using a Metropolis--Hastings (MH) step, resulting in a so-called Metropolis-within-Gibbs procedure.

\section{THEORETICAL RESULTS}
\label{sec:convergence}

Our model \eqref{eq:rrgpssm} and learning Algorithm~\ref{alg:gibbs} inherits certain well-defined properties from the reduced-rank approximation and the presented sampling scheme. In the first theorem, we consider the convergence of a series expansion approximation to the GP-SSM with an increasing number $m$ of basis functions. As in \cite{SS:2014}, we only provide the convergence results for a rectangular domain with Dirichlet boundary conditions, but the result could easily be extended to a more general case. Proofs for all theorems are included in the supplementary material.

\begin{theorem}\label{thm:approx}
  The probabilistic model implied by the dynamic and measurement models of the approximate GP-SSM convergences in distribution to the exact GP-SSM, when the size of the domain $\Omega$ and the number of basis functions $m$ tends to infinity.
\end{theorem}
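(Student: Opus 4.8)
The plan is to reduce the convergence of the GP-SSM to the convergence of the underlying Gaussian process priors on $\vect{f}$ and $\vect{g}$, and then invoke the approximation result of \cite{SS:2014} for a single GP on a growing rectangular domain with Dirichlet boundary conditions. Concretely, I would first observe that a GP-SSM, as a probabilistic model over $(\vect{x}_{1:T},\vect{y}_{1:T})$ given the initial distribution $p(\vect{x}_1)$, is completely determined by the transition kernel $p(\vect{x}_{t+1}\mid \vect{x}_t)$ and the observation kernel $p(\vect{y}_t\mid \vect{x}_t)$. Since the noise terms $\vect{w}_t\sim\N(\vect{0},\mat Q)$ and $\vect{e}_t\sim\N(\vect{0},\mat R)$ are fixed, these kernels are Gaussian with means $\vect{f}(\vect{x}_t)$ and $\vect{g}(\vect{x}_t)$; hence the only approximation is in the law of the random functions $\vect{f}$ and $\vect{g}$. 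So the theorem follows if the finite-dimensional distributions of the approximate $\vect{f}$ (and $\vect{g}$) — i.e. the joint law of $\big(\vect{f}(\vect{x}),\vect{f}(\vect{x}')\big)$ at any finite collection of evaluation points — converge to those of the exact GP prior as $m\to\infty$ and $|\Omega|\to\infty$.

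The second step is to make that reduction rigorous. For a fixed realization of the states, the likelihood $p(\vect{y}_{1:T}\mid \vect{x}_{1:T})$ and the transition density only ever touch $\vect{f}$ and $\vect{g}$ through finitely many function values $\{\vect{f}(\vect{x}_t)\}_{t=1}^{T}$ and $\{\vect{g}(\vect{x}_t)\}_{t=1}^{T}$. I would therefore write the joint density $p(\vect{x}_{1:T},\vect{y}_{1:T})$ as an integral over these finitely many function values against the relevant marginal of the (approximate or exact) GP prior, and use the convergence of the approximate covariance $\kappa_\theta(\vect{x},\vect{x}')\to$ the exact $\kappa_\theta(\vect{x},\vect{x}')$ pointwise (uniformly on compacts), established in \cite{SS:2014}, to conclude convergence of these finite-dimensional Gaussian marginals. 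Convergence of the covariance matrices entails convergence in distribution of the corresponding multivariate Gaussians (e.g. via convergence of characteristic functions), and then an application of the dominated convergence theorem — using that the Gaussian densities are bounded and the bounded likelihood factors are continuous in the function values — transfers this to convergence of the joint density, hence convergence in distribution of the model. The domain limit $|\Omega|\to\infty$ enters because the Hilbert-space approximation of \cite{SS:2014} is only accurate on the interior of $\Omega$; one takes $\Omega$ large enough to contain the region of state space visited, and the \cite{SS:2014} bounds give uniformity there.

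The main obstacle is handling the two limits jointly and controlling them uniformly: the eigenvalues $\lambda_j$ and eigenfunctions $\phi^{(j)}$ depend on $\Omega$, so ``$m\to\infty$'' and ``$|\Omega|\to\infty$'' are coupled, and one needs the \cite{SS:2014} error bound in a form that says: for any compact $K$ and any $\epsilon$, there exist $\Omega$ and $m$ so that $\sup_{\vect{x},\vect{x}'\in K}|\kappa^{\text{approx}}_{m,\Omega}(\vect{x},\vect{x}') - \kappa_\theta(\vect{x},\vect{x}')| < \epsilon$. A secondary technical point is that the state trajectory itself is random and a priori unbounded, so strictly speaking $K$ must be chosen to have probability arbitrarily close to one under the exact model (tightness of $\vect{x}_{1:T}$), and one argues on that event; since $T$ is finite and the transition noise is Gaussian, this tightness is immediate. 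I would also need the spectral density $S_\theta$ to be continuous and positive so that $\mat V$ in \eqref{eq:GP_prior} is well-defined throughout; this holds for the standard stationary kernels considered.

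Finally, I would remark that the argument is carried out for the dynamics $\vect{f}$ and the measurement function $\vect{g}$ symmetrically — the observation model is a GP-SSM component of exactly the same type — so no separate treatment is needed, and the convergence of the full model over $(\vect{x}_{1:T},\vect{y}_{1:T})$ follows by combining the two. The extension beyond rectangular domains with Dirichlet conditions, as noted in the text, only requires the corresponding eigenfunction approximation result, which is why I restrict to that case.
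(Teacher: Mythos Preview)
Your proposal is correct and takes essentially the same approach as the paper: reduce to the pointwise covariance convergence of \cite{SS:2014}, then use that Gaussian distributions are determined by their first two moments to conclude convergence in distribution of the one-step conditionals $p(\vect{x}_{t+1}\mid\vect{x}_t)$ and $p(\vect{y}_t\mid\vect{x}_t)$. Your treatment is considerably more careful---attending to the joint trajectory law, uniform convergence on compacts, tightness of $\vect{x}_{1:T}$, and a dominated-convergence step---whereas the paper's proof simply fixes $\vect{x}_t$ and observes that the conditional mean and covariance converge, but the underlying strategy is identical.
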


The above theorem means that in the limit any probabilistic inference in the approximate model will be equivalent to inference on the exact model, because the prior and likelihood models become equivalent. The benefit of considering the $m$-rank model instead of a standard GP, is the following:

\begin{theorem}\label{thm:comp_load}
  Provided the rank-reduced approximation, the computational load scales as $\mathcal{O}(m^2 T)$ as opposed to $\mathcal{O}(T^3)$.
\end{theorem}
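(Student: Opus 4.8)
The plan is to bound the cost of a single sweep of the blocked Gibbs sampler in Algorithm~\ref{alg:gibbs} line by line, tracking only the dependence on the trajectory length $T$ and the rank $m$ while treating the state and output dimensions $n_x,n_y$ and the number of particles $N$ as fixed constants, and then to contrast the dominant term with the cost of an exact GP-SSM. Two structural features of the reduced-rank model \eqref{eq:rrgpssm} do the work. First, the GP prior on $\vect{f}$ has been replaced by the finite $m$-term expansion \eqref{eq:f_prior}, so no dense $\mathcal{O}(T)\times\mathcal{O}(T)$ Gram matrix of $\vect{f}$ evaluated at the $T$ latent states is ever formed — and it is precisely the Cholesky factorisation / log-determinant of that matrix that forces the $\mathcal{O}(T^3)$ scaling in a non-approximated GP-SSM. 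Second, by \eqref{eq:approx} the eigenpairs $(\lambda_j,\phi^{(j)})$ do not depend on $\vectb{\theta}$, so a change of $\vectb{\theta}$ alters only the diagonal matrix $\mat{V}$ of \eqref{eq:GP_prior}, at cost $\mathcal{O}(m)$.

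I would first bound the particle-Gibbs kernel (Algorithm~\ref{alg:CPFAS}, line~\ref{alg:gibbs:x}): for each of the $T$ steps and $N$ particles, evaluating the basis vector $\Phi(\vect{x}_t)$ costs $\mathcal{O}(m)$, forming the transition mean $\mat{A}\Phi(\vect{x}_t)$ costs $\mathcal{O}(n_x m)$, and propagation, (ancestor) weighting and resampling add $\mathcal{O}(N)$; hence the kernel is $\mathcal{O}(NmT)$, linear in both $T$ and $m$. Next, for lines~\ref{alg:gibbs:q}--\ref{alg:gibbs:a}, forming the sufficient statistics $\matb{\Sigma}=\sum_{t=1}^{T}\vect{z}_t\vect{z}_t\Transp$, $\matb{\Psi}$, $\matb{\Phi}$ amounts to $T$ rank-one accumulations into $m\times m$, $n_x\times m$ and $n_x\times n_x$ matrices, dominated (for $m\gtrsim n_x$) by the first at $\mathcal{O}(m^2 T)$; the subsequent manipulations in \eqref{eq:Qdist}--\eqref{eq:Adist} — adding the diagonal $\mat{V}$, inverting the $m\times m$ matrix $\matb{\Sigma}+\mat{V}$, the matrix products, and sampling from the $\mathcal{MNIW}$ law via its Cholesky factors — are all $\mathcal{O}(m^3)$ and carry no factor of $T$. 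Finally, the Metropolis--Hastings update of $\vectb{\theta}$ (line~\ref{alg:gibbs:eta}) evaluates the ratio from \eqref{eq:hyp_post} by reusing the already-accumulated statistics, rebuilding only $\mat{V}$ at $\mathcal{O}(m)$ and redoing the $\mathcal{O}(m^3)$ $\mathcal{MNIW}$ evaluation; the $\mathcal{O}(m^2 T)$ accumulation is \emph{not} repeated.

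Adding these up, one MCMC iteration costs $\mathcal{O}(m^2 T + m^3)$, so in the regime of interest $m\lesssim T$ the cost is $\mathcal{O}(m^2 T)$ — linear in the data length — as against the $\mathcal{O}(T^3)$ of factorising the dense $\mathcal{O}(T)\times\mathcal{O}(T)$ Gram matrix that an exact GP treatment of $\vect{f}$ entails, which is the claim. The part that needs care is not any hard estimate but the bookkeeping: stating precisely which quantities are held constant ($N,n_x,n_y$), flagging the subdominant $\mathcal{O}(m^3)$ term so the headline rate is not overstated, and making explicit that it is the $\vectb{\theta}$-independence of the basis in \eqref{eq:approx} that keeps the Metropolis-within-Gibbs step from silently reintroducing a $T$-dependent cost per proposal.
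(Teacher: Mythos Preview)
Your proposal is correct, but it takes a markedly different route from the paper. The paper's own proof is a single sentence: the reduced-rank Gram matrix has the form $\matb{\Phi}\,\mathrm{diag}(S_\theta(\lambda_j))\,\matb{\Phi}\Transp$, so the reduction from $\mathcal{O}(T^3)$ to $\mathcal{O}(m^2T)$ ``directly follows from application of the matrix inversion lemma.'' You instead give a line-by-line operation count of Algorithm~\ref{alg:gibbs}, locating the dominant $\mathcal{O}(m^2T)$ cost in the accumulation of the $m\times m$ statistic $\matb{\Sigma}=\sum_t\vect{z}_t\vect{z}_t\Transp$ and verifying that the remaining steps (PGAS, the $\mathcal{MNIW}$ sampling, the MH update of $\vectb{\theta}$) are either linear in $T$ with an $\mathcal{O}(m)$ prefactor or $T$-free at $\mathcal{O}(m^3)$. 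Your argument is more concrete and more informative: it ties the complexity claim to the actual sampler rather than to an abstract Gram-matrix inversion, it exposes the subdominant $\mathcal{O}(m^3)$ term, and it makes explicit the role of the $\vectb{\theta}$-independence of the basis in keeping the MH step cheap. The paper's version, by contrast, is shorter but leaves the reader to reconstruct how the Woodbury identity connects to the algorithm as written; in fact, in the parametrisation \eqref{eq:rrgpssm} no $T\times T$ matrix is ever formed, so the matrix inversion lemma is more of a conceptual bridge to the standard GP cost than a computational step one actually performs.
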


Furthermore, the proposed learning procedure enjoys sound theoretical properties:

\begin{theorem}\label{thm:inv_dist}
  Assume that the support of the proposal in the MH algorithm covers the support of the posterior $p(\vectb{\theta} \mid \vect{x}_{1:T}, \mat{Q}, \mat{A},\vect{y}_{1:T})$, and $N \geq 2$ in Algorithm~\ref{alg:CPFAS}. Then the invariant distribution of Algorithm~\ref{alg:gibbs} is $p(\vect{x}_{1:T}, \mat{Q}, \mat{A}, \vectb{\theta} \mid \vect{y}_{1:T})$.
\end{theorem}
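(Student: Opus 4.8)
The plan is to establish the result by verifying that Algorithm~\ref{alg:gibbs} is an instance of a Metropolis-within-partially-collapsed-Gibbs sampler, each of whose constituent kernels leaves the target posterior $p(\vect{x}_{1:T},\mat{Q},\mat{A},\vectb{\theta}\mid\vect{y}_{1:T})$ invariant, and then composing these invariances. The key observation is that the joint posterior factors conveniently: the four updates in steps~\ref{alg:gibbs:x}--\ref{alg:gibbs:eta} correspond to (i) a particle-Gibbs move on $\vect{x}_{1:T}$, (ii)--(iii) exact conditional draws of $\mat{Q}$ and $\mat{A}$, and (iv) an MH move on $\vectb{\theta}$. Standard results on Markov kernel composition state that if each kernel $K_i$ has $\pi$ as an invariant distribution, then the composition $K_1 K_2 \cdots K_4$ also has $\pi$ as an invariant distribution; so it suffices to check the four kernels individually.

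First I would handle step~\ref{alg:gibbs:x}. Here I invoke the main result on PGAS from \citet{LJS:2014}: for any $N \geq 2$, the conditional particle filter with ancestor sampling (Algorithm~\ref{alg:CPFAS}) defines a Markov kernel on the trajectory space that leaves $p(\vect{x}_{1:T}\mid\mat{A},\mat{Q},\vect{y}_{1:T})$ invariant --- crucially, for \emph{finite} $N$, with no reliance on $N\to\infty$ asymptotics. Since this is exactly the full conditional of $\vect{x}_{1:T}$ in our model (with $\mat{C},\mat{R}$ fixed), this step is a valid partial Gibbs update. Next, steps~\ref{alg:gibbs:q} and~\ref{alg:gibbs:a}: by the MNIW conjugacy worked out in Section~\ref{sec:sample_hyp}, equations~\eqref{eq:Qdist} and~\eqref{eq:Adist} \emph{are} the exact full conditionals $p(\mat{Q}\mid\vect{x}_{1:T},\vectb{\theta},\vect{y}_{1:T})$ and $p(\mat{A}\mid\mat{Q},\vect{x}_{1:T},\vectb{\theta},\vect{y}_{1:T})$, so these are textbook Gibbs steps and trivially preserve $\pi$. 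For step~\ref{alg:gibbs:eta}, the MH update targets $p(\vectb{\theta}\mid\vect{x}_{1:T},\mat{Q},\mat{A},\vect{y}_{1:T})$ which we can evaluate up to proportionality via~\eqref{eq:hyp_post}; under the stated support-covering assumption on the proposal, the MH kernel is $\pi$-reversible, hence $\pi$-invariant.

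One subtlety I would be careful to address is that the chain is \emph{not} a plain systematic-scan Gibbs sampler: step~\ref{alg:gibbs:q} conditions on $\mat{A}[k]$ (the old value), not on a freshly drawn $\mat{A}$, so the ordering looks like a ``partially collapsed'' or reordered scan. The clean way to see validity is: the pair of updates (sample $\mat{Q}\mid\mat{A}_{\text{old}},\ldots$ then sample $\mat{A}\mid\mat{Q}_{\text{new}},\ldots$) is exactly a two-step blocked Gibbs sweep over the block $(\mat{Q},\mat{A})$ in the order $\mat{Q}$-then-$\mat{A}$, which jointly leaves $p(\mat{Q},\mat{A}\mid\vect{x}_{1:T},\vectb{\theta},\vect{y}_{1:T})$ invariant; the intermediate conditioning on $\mat{A}[k]$ is harmless because $\mat{A}[k]$ is itself distributed according to the appropriate conditional when the chain is at stationarity. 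I would state this as a short lemma (composition of Gibbs/MH kernels preserves the joint invariant law, and reordering blocks within a sweep is fine) and then simply instantiate it.

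The main obstacle, and the only place requiring genuine care rather than bookkeeping, is the first step: making precise that the finite-$N$ PGAS kernel has the exact smoothing distribution as invariant law, and that this remains true when it is embedded as one component of the larger Gibbs sweep (i.e.\ that the ``extended target'' construction of particle MCMC composes correctly with the other, non-particle, Gibbs and MH moves). I would address this by citing the particle-Gibbs framework of \citet{ADH:2010} together with the ancestor-sampling refinement of \citet{LJS:2014}, noting that the extended-space argument guarantees the marginal in $\vect{x}_{1:T}$ is exactly correct, so that from the point of view of the remaining variables $(\mat{Q},\mat{A},\vectb{\theta})$ the $\vect{x}_{1:T}$-update behaves precisely like an exact Gibbs step. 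Everything else --- conjugacy of~\eqref{eq:Qdist}--\eqref{eq:Adist}, reversibility of the MH step under the support assumption, and the composition argument --- is routine.
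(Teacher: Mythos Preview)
Your proposal is correct and takes essentially the same approach as the paper's own (much terser) proof: verify that each of the four steps leaves the appropriate conditional invariant---citing \citet{LJS:2014} for the PGAS kernel, conjugacy via \citet{WSL+:2012} for the $\mat{Q}$ and $\mat{A}$ updates, and Metropolis-within-Gibbs (Tierney) for $\vectb{\theta}$---and then appeal to the standard Gibbs composition argument. Your additional care with the $(\mat{Q},\mat{A})$ ordering and the extended-target justification for PGAS goes beyond what the paper actually spells out, but the underlying strategy is identical.
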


Hence, Theorem~\ref{thm:inv_dist} guarantees that our learning procedure indeed is sampling from the distribution we expect it to, even when a \emph{finite} number of particles $N \geq 2$ is used in the Monte Carlo based Algorithm~\ref{alg:CPFAS}. It is also possible to prove uniform ergodicity for Algorithm~\ref{alg:gibbs}, as such a result exists for Algorithm~\ref{alg:CPFAS} \citep{LJS:2014}.

\section{NUMERICAL EXAMPLES}
\label{sec:examples}

\begin{table*}[!t]
	\caption{Results for synthetic and real-data numerical examples.}
	\label{tab:ex}
	\footnotesize\scshape
	\begin{tabularx}{\textwidth}{@{\extracolsep{\fill}}l l c c c c l} 
		\toprule
		\multicolumn{2}{l}{Data / Method} & RMSE & LL & Train time [min] & Test time [s] & Comments \\
		\midrule
		\multicolumn{2}{l}{\emph{Synthetic data:}} \\
		& PMCMC GP-SSM \cite{FLS+:2013} 		& $\bm{1.12}$ 	& $-1.57$ 		& 547 		 & $420$ 		& As reported by \cite{FCR:2014}. \\
		& Variational GP-SSM \cite{FCR:2014}	& ${1.15}$ 		& $-1.61$ 		& $2.1$ 	 & $\bm{0.14}$ 	& As reported by \cite{FCR:2014}. \\
		& Reduced-rank GP-SSM 					& $\bm{1.10}$ 	& $\bm{-1.52}$ 	& $\bm{0.7}$ & $\bm{0.18}$ 		& Average over 10 runs.\\
		\multicolumn{2}{l}{\emph{Damper modeling:}} \\
		& Linear OE model (4th order) 		&$27.1$ 		& N/A \\
		& Hammerstein--Wiener (4th order) 	&$27.0$ 		& N/A \\
		& NARX (3rd order, wavelet network) &$24.5$	 		& N/A \\
		& NARX (3rd order, Tree partition)  &$19.3$	 		& N/A \\
		& NARX (3rd order, sigmoid network) &$\bm{8.24}$ 	& N/A \\
		& Reduced-rank GP-SSM 				&$\bm{8.17}$ 	& $\bm{-3.71}$ \\
		\multicolumn{2}{l}{\emph{Energy forecasting:}} \\
		& Static GP & $27.7$ & $-2.54$ \\
		& Reduced-rank GP-SSM & $\bm{21.8}$ & $\bm{-2.41}$ \\
		\bottomrule
	\end{tabularx}
	\vspace*{-1.5em}
\end{table*}

In this section, we will demonstrate and evaluate our contribution, the model \eqref{eq:rrgpssm} and the associated learning Algorithm~\ref{alg:gibbs}, using four numerical examples. We will demonstrate and evaluate the proposed method (including the convergence of the learning algorithm) on two synthetic examples and two real-world datasets, as well as making a comparison with other methods.

In all examples, we separate the data set into training data $\vect{y}^\textrm{t}$ and evaluation data $\vect{y}^\textrm{e}$. To evaluate the performance quantitatively, we  compare the estimated data $\widehat{\vect{y}}$ to the true data $\vect{y}^\textrm{e}$ using the root mean square error (RMSE) and the mean log likelihood (LL):
\begin{equation}
  \textrm{RMSE} = \sqrt{\frac{1}{T_\textrm{e}}\sum_{t=1}^{T_e}\left|\widehat{\vect{y}}_t-\vect{y}_t^\textrm{e}\right|^2}
\end{equation}
and
\begin{equation}
  \textrm{LL} = \frac{1}{T_\textrm{e}}\sum_{t=1}^{T_e}\log\N(\vect{y}_t^\textrm{e}\mid\mathbb{E}[\widehat{\vect{y}}_t],\mathbb{V}[\widehat{\vect{y}}_t]).
\end{equation}
The source code for all examples is available via the first authors homepage.

\subsection{Synthetic Data}
\label{sec:ex1}
As a proof-of-concept already presented in Figure~\ref{fig:simple}, we have $T = 500$ data points from the model
\begin{equation}
  x_{t+1} = \mathrm{tanh}(2x_t) + w_t, \qquad
  y_t = x_t + e_t,
\end{equation}
where $e_t\sim \N(0,0.1)$ and $w_t \sim \mathcal{N}(0,0.1)$. We inferred $f$ and $Q$, using a GP with the exponentiated quadratic (squared exponential, parametrized as in \citealt{RW:2006}) covariance function with unknown hyperparameters, and $Q \sim \mathcal{IW}(10,1)$ as priors. In this one-dimensional case ($x \in [-L, L], L = 4$), the eigenvalues and eigenfunctions are $\lambda_j = (\pi j/(2 L))^2$ and $\phi^{(j)}(x) = 1/\sqrt{L} \sin(\pi j (x + L)/(2 L))$. The spectral density corresponding to the covariance function is $S_\theta(\omega) = \sigma^2 \sqrt{2\pi\ell^2}\exp(-\omega^2 \ell^2/2)$.

The posterior estimate of the learned model is shown in Figure~\ref{fig:simple}, together with the samples of the basis function weights $f^{(j)}$. The variance of the posterior distribution of $f$ increases in the regimes where the data is not exciting the model.

As a second example, we repeat the numerical benchmark example on synthetic data from \cite{FCR:2014}: A one-dimensional state space model $x_{t+1} = x_t+1 + w_t$, if $x_t < 4$, and $x_{t+1} = -4x_t + 21$, if $x_t \geq 4$ with known measurement equation $y_t = x_t + e_t$, and noise distributed as $w_t \sim \N(0,1)$ and $e_t \sim \N(0,1)$. The model is learned from $T = 500$ data points, and evaluated with $T_\textrm{e} = 10^5$ data points. As in \cite{FCR:2014}, a Mat\'{e}rn covariance function is used (see, \eg, Section 4.2.1 of \citealt{RW:2006} for details, including its spectral density). The results for our model with $K = 200$ MCMC iterations and $m = 20$ basis functions are provided in Table~\ref{tab:ex}.

\begin{figure}[t!]
	\centering%
	\includegraphics{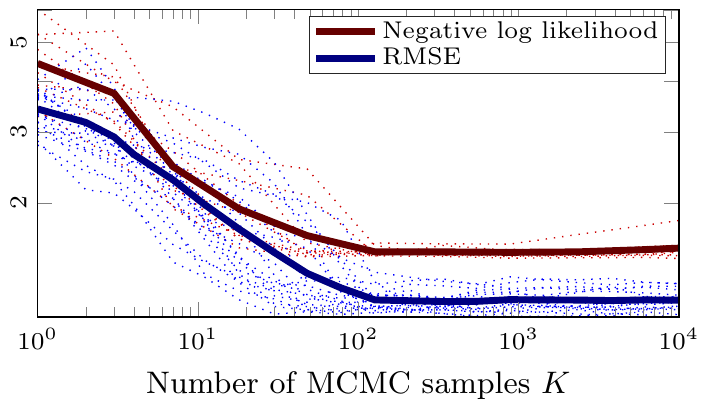}
	\caption{The (negative) log likelihood and RMSE for the second synthetic example, as a function of number of MCMC samples $K$, averaged (solid lines) over 10 runs (dotted lines).}
	\vspace*{-1em}
	\label{fig:simple-convergence}
\end{figure}

We also re-state two results from \cite{FCR:2014}: The GP-SSM method by \cite{FLS+:2013} (which also uses particle MCMC for learning) and the variational GP-SSM by \cite{FCR:2014}. Due to the compact writing in \cite{FLS+:2013,FCR:2014}, we have not been able to reproduce the results, but to make the comparison as fair as possible, we average our results over 10 runs (with different realizations of the training data). Our method was evaluated using the provided Matlab implementation on a standard desktop computer\footnote{Intel~i7-4600 2.1~GHz CPU.}.

The choice to use only $K = 200$ iterations of the learning algorithm is motivated by Figure~\ref{fig:simple-convergence}, illustrating the `model quality' (in terms of log likelihood and RMSE) as a function of $K$: It is clear from Figure~\ref{fig:simple-convergence} that the model quality is of the same magnitude after a few hundred samples and after $10\,000$ samples. As we know the sampler converges to the right distribution in the limit $K \rightarrow \infty$, this indicates that the sampler converges already after a few hundred samples for this example. This is most likely thanks to the linear-in-parameter structure of the reduced-rank GP, allowing for the efficient Gibbs updates (\ref{eq:Qdist}--\ref{eq:Adist}).

There is an advantage for our proposed reduced-rank GP-SSM in terms of LL, but considering the stochastic elements involved in the experiment, the different RMSE performance results are hardly outside the error bounds. Regarding the computational load, however, there is a substantial advantage for our proposed method, enjoying a training time less only a third of the one by the variational GP-SSM, which in turn outperforms the method by \cite{FLS+:2013}.

\subsection{Nonlinear Modeling of a Magneto-Rheological Fluid Damper}
\label{sec:mr}
We also compare our proposed method to state-of-the-art conventional system identification methods \citep{Ljung:1999}. The problem is the modeling of input--output behavior of a magneto-rheological fluid damper, introduced by \cite{WSC+:2009} and used as a case study in the System Identification Toolbox for Mathworks Matlab \citep{Mathworks:2015}. The data consists of $3\,499$ data points, of which $2\,000$ are used for training and the remaining for evaluation, shown in Figure~\ref{fig:damper}. The data exhibits some non-trivial dynamics, and as the $T=2\,000$ data points probably not contain enough information to determine the system uniquely, a certain amount of uncertainty is present in the posterior. This is thus an interesting and realistic problem for a Bayesian method, as it possibly can provide useful information about the posterior uncertainty, not captured in classical maximum likelihood methods for system identification.

We learn a three-dimensional model:
\begin{subequations}
\begin{align}\label{eq:ex2:mod}
  \vect{x}_{t+1} &= \vect{f}_{x}(\vect{x}_t) + \vect{f}_{u}(u_t) + \vect{w}_t, \\
  y_t &= [0~0~1]\vect{x}_t + e_t
\end{align}
\end{subequations}
where $\vect{x}_t \in \Reals^3$, $e_t \sim \N(0,5)$, and $\vect{w}_t \sim \N(\vect{0},\mat{Q})$ with $\mat{Q}$ unknown. We assume a GP prior with an exponentiated quadratic covariance function, with separate length-scales for each dimension. We use $m = 7^3 = 343$ basis functions\footnote{Explicit expression for the basis functions in the multidimensional case is found in the supplementary material.} for $\vect{f}_x$ and $8$ for $\vect{f}_u$, which in total gives $1\,037$ basis function weights $f^{(j)}$ and $5$ hyperparameters $\vectb{\theta}$ to sample.

The learned model was used to simulate a distribution of the output for the test data, plotted in Figure~\ref{fig:damper}. Note how the variance of the prediction changes in different regimes of the plot, quantifying the uncertainty in the posterior belief. The resulting output is also evaluated quantitatively in Table~\ref{tab:ex}, together with five state-of-the-art maximum likelihood methods, and our proposed method performs on par with the best of these. The learning algorithm took about two hours to run on a standard desktop computer.

The assumed model with known linear $g$ and additive form $\vect{f}_x + \vect{f}_u$ could be replaced by an even more general structure, but this choice seems to give a sensible trade-off between structure (reducing computational load) and flexibility (increasing computational load) for this particular problem. Our proposed Bayesian method does indeed appear as a realistic alternative to the maximum likelihood methods, without any more problem-specific tailoring than the rather natural model assumption \eqref{eq:ex2:mod}.

\begin{figure}[t!]
  \hspace*{\fill}
  \begin{subfigure}[b]{\columnwidth}
    \centering%
    \includegraphics{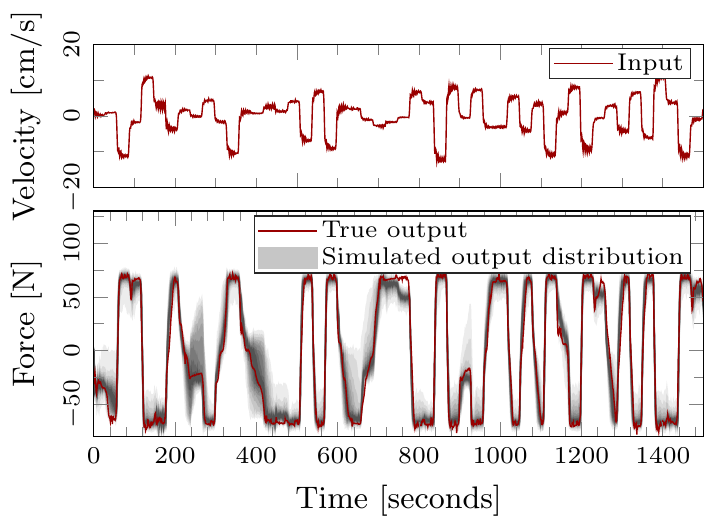}
    \caption{Fluid damper results}
    \label{fig:damper}
  \end{subfigure}
  \hspace*{\fill}
  \begin{subfigure}[b]{\columnwidth}
	\centering%
    \includegraphics{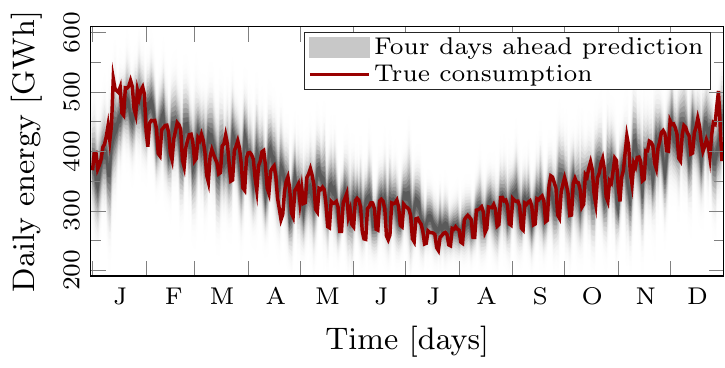}
    \caption{Electricity consumption example}
    \label{fig:energy}
  \end{subfigure}
  \hspace*{\fill}
  \vspace*{-1em}
  \caption{Data (red) and predicted distributions (gray) for the real-data examples. It is interesting to note how the variance in the prediction changes between different regimes in the plots.}
  \vspace*{-.5em}
  \label{fig:real}
\end{figure}

\subsection{Energy Consumption Forecasting}
\label{sec:energy}
As a fourth example, we consider the problem of forecasting the daily energy consumption in Sweden  
\footnote{Data from Svenska Kraftn\"{a}t, available: \url{http://www.svk.se/aktorsportalen/elmarknad/statistik/}.} 
four days in advance. The daily data from 2013 was used for training, and the data from 2014 for evaluation. The time-series was modeled as an autonomous dynamical system (driven only by noise), and a three dimensional reduced-rank GP-SSM was trained for this, with all functions and parameters unknown. To obtain the forecast, the model was used inside a particle filter to find the state distribution, and the four step ahead prediction density was computed. The data and the predictions are shown in Figure~\ref{fig:energy}.

As a sanity check, we compare to a standard GP, not explicitly accounting for any dynamics in the time-series. The standard GP was trained to the mapping from $y_t$ to $y_{t+4}$, and the performance is evaluated in Table~\ref{tab:ex}. From Table~\ref{tab:ex}, the gain of encoding dynamical behavior in the model is clear.

\section{DISCUSSION}
\label{sec:discussion}
\subsection{Tuning}
For a successful application of the proposed algorithm, there are a few algorithm-specific parameters for the user to choose: The number of basis functions $m$ and the number of particles $N$ in PGAS. A large number of basis functions $m$ makes the model more flexible and the reduced-rank approximation `closer' to a non-approximated GP, but it also increases the computational load. With a smooth covariance function~$\kappa$, the prior is in practice $f^\ji \approx 0$  for moderate $j$, and $m$ can be chosen fairly small (as a rule of thumb, say, 6--15 per dimension) without making a too crude approximation. In our experience, the number of particles $N$ in PGAS can be chooses fairly small (say, $20$), without affecting the mixing properties of the Markov chain heavily. This is in accordance to what has been reported in the literature by \citet{LJS:2014}.

\subsection{Properties of the Proposed Model}
We have proposed to use the reduced-rank approximation of GPs by \cite{SS:2014} within a state space model, to obtain a GP-SSM which efficiently can be learned using a PMCMC algorithm. As discussed in Section~\ref{sec:learning} and studied using numerical examples in Section~\ref{sec:examples}, the linear-in-the-parameter structure of the reduced-rank GP-SSM allows for a computationally efficient learning algorithm. However, the question if a similar performance could be obtained with another GP approximation method or another learning scheme arises naturally.

Other GP approximation methods, for example pseudo-inputs, would most likely not allow for such efficient learning as the reduced-rank approximation does; unless closed-form Gibbs updates are available (requiring a linear-in-the-parameter structure, or similar), the parameter learning would have to resort to Metropolis--Hastings, which most likely would give a significantly slower learning procedure. For many GP approximation methods it is also more natural to find a point estimate of the parameters (the inducing points, for example) using, for example, EM, rather than inferring the parameter posterior, as is the case in this paper.

The learning algorithm, on the other hand, could probably be replaced by some other method also inferring (at least approximately) the posterior distribution of the parameters, such as SMC$^2$ \citep{CJP:2013} or a variational method. However, to maintain efficiency, the method has to utilize the linear-in-the-parameter structure of the model to reach a computational load competitive with our proposed scheme. Such an alternative (however only inferring MAP estimate of the sought quantities) could possibly be the method by \cite{KSS:2014}.

\subsection{Conclusions}
We have proposed the reduced-rank GP-SSM \eqref{eq:rrgpssm}, and provided theoretical support for convergence towards the full GP-SSM. We have also proposed a theoretically sound MCMC-based learning algorithm (including the hyperparameters) utilizing the structure of the model efficiently.

By demonstration on several examples, the computational load and the modeling capabilities of our approach have been proven to be competitive. The computational load of the learning is even less than in the variational sparse GP solution provided by \cite{FCR:2014}, and the performance in challenging input--output modeling is on par with well-established state-of-the-art maximum likelihood methods.

\subsection{Possible Extensions and Further Work}
A natural extension for applications where some domain knowledge is present, is to let the model include some functions with an \textit{a~priori} known parametrization. The handling of such models in the learning algorithm should be feasible, as it is already known how to use PGAS for such models \citep{LJS:2014}. Further, the assumptions of the $\mathcal{IW}$ prior of $\mat{Q}$ \eqref{eq:Qprior} are possible to circumvent by using, for example, MH, at the cost of an increased computational load. The same holds true for the Gaussian noise assumption in \eqref{eq:rrgpssm}.

Another direction for further work is to adapt the process to be able to sequentially learn and improve the model when data is added in batches, by formulating the previously learned model as the prior to the next iteration of the learning. This could probably be useful in, for example, reinforcement learning, along the lines of \cite{DFR:2015}. 

In the engineering literature, dynamical systems are mostly defined in discrete time. An interesting approach to model the continuous-time counterpart using Gaussian processes is presented by \cite{RBO:2013}. A development of the reduced-rank GP-SSM to continuous time dynamical models using stochastic Runge--Kutta methods would be of great interest for further research.

\newpage
\subsection*{References}

\begingroup
\renewcommand{\section}[2]{}%
\bibliographystyle{abbrunsrtnat}
\bibliography{references}
\endgroup

\clearpage
\normalsize\onecolumn\thispagestyle{empty}

\setcounter{section}{0}
\section*{\huge Supplementary material}

\setcounter{section}{0}

This is the supplementary material for `Computationally Efficient Bayesian Learning of Gaussian Process State Space Models' by Svensson, Solin, S\"{a}rkk\"{a} and Sch\"{o}n, presented at AISTATS~2016. The references in this document point to the bibliography in the article.

\section{Proofs}

\begin{proof}[Proof of Theorem~\ref{thm:approx}]
  Let us start by considering the GP approximation to $\vect{f}(\vect{x})$, $\vect{x} \in [-L_1,L_1] \times \cdots \times [-L_d,L_d]$. By Theorem~4.4 of \cite{SS:2014}, when domain size $\inf_i L_i \to \infty$ and the number of basis functions $m \to \infty$, the approximate covariance function $\vectb{\kappa}_m(\vect{x},\vect{x}')$ converges point-wise to $\vectb{\kappa}(\vect{x},\vect{x}')$. As the prior means of the exact and approximate GPs are both zero, the means thus converge as well. By similar argument as is used in the proof of Theorem~2.2 in \cite{SP:2014} it follows that the posterior mean and covariance functions will converge point-wise as well.

Now, consider the random variables defined by
\begin{align}
  \vect{x}_{t+1} &= \vect{f}(\vect{x}_t) + \vect{w}_t, \\
  \hat{\vect{x}}_{t+1} &= \vect{f}_m(\vect{x}_t) + \vect{w}_t,
\end{align}
where $\vect{f}_m$ is an $m$-term series expansion approximation to the GP. It now follows that for any fixed $\vect{x}_t$ the mean and covariance of $\vect{x}_{t+1}$ and $\hat{\vect{x}}_{t+1}$ coincide when $L_i, m \to \infty$. However, because these random variables are Gaussian, the first two moments determine the whole distribution and hence we can conclude that $\hat{\vect{x}}_{t+1} \to \vect{x}_{t+1}$ in distribution.

For the measurement model we can similarly consider the random variables
\begin{align}
  \vect{y}_t &= \vect{g}(\vect{x}_t) + \vect{e}_t, \\
  \hat{\vect{y}}_t &= \vect{g}_m(\vect{x}_t) + \vect{e}_t,
\end{align}
With similar argument as above, we can conclude that the approximation converges in distribution.
\end{proof}

\begin{proof}[Proof of Theorem~\ref{thm:comp_load}]
  Provided the reduced-rank approximation of the Gram matrix, the reduction in the computational load directly follows from application of the matrix inversion lemma.
\end{proof}

\begin{proof}[Proof of Theorem~\ref{thm:inv_dist}]
  Using fundamental properties of the Gibbs sampler (see, \eg, \cite{Tierney:1994}), the claim holds if all steps of Algorithm~\ref{alg:gibbs} are leaving the right conditional probability density invariant. Step~\ref{alg:gibbs:x} is justified by \cite{LJS:2014} (even for a finite $N$), and step~\ref{alg:gibbs:q}--\ref{alg:gibbs:a} by \cite{WSL+:2012}. Further, step~\ref{alg:gibbs:eta} can be seen as a Metropolis-within-Gibbs procedure \citep{Tierney:1994}.
\end{proof}

\section{Details on Matrix Normal and Inverse Wishart distributions}

As presented in the article, the matrix normal inverse Wishart (MNIW) distribution is the conjugate prior for state space models linear in its parameters $\mat{A} \in \Reals^{n \times m}$ and $\mat{Q} \in \Reals^{n \times n_x}$ \cite{WSL+:2012}. The MNIW distribution can be written as $\mathcal{MN}(\mat{A},\mat{Q}\mid M,\mat{V},\ell,\matb{\Lambda}) = \mathcal{MN}(\mat{A}\mid M,\mat{Q},\mat{V})\times\mathcal{IW}(\mat{Q}\mid\ell,\matb{\Lambda})$, where each part is defined as follows:
\begin{itemize}
	\item The pdf for the Inverse Wishart distribution with $\ell$ degrees of freedom and positive definite scale matrix $\matb{\Lambda} \in \Reals^{n \times n}$:
	\begin{equation}
	\mathcal{IW}(\mat{Q}\mid\ell,\matb{\Lambda}) = \frac{|\matb{\Lambda}|^{\ell/2}|\mat{Q}|^{-(n+\ell+1)/2}}{2^{\ell n/2}\Gamma_{n}(\ell/2)}\exp\left(-\frac{1}{2}\mathrm{tr}(\mat{Q}^{-1}\matb{\Lambda})\right)\label{eq:IWpdf}
	\end{equation}
	with $\Gamma_{n}(\cdot)$ being the multivariate gamma function.
	
	\item The pdf for the Matrix Normal distribution with mean $\mat{M} \in \Reals^{n \times m}$, right covariance $\mat{Q} \in \Reals^{n \times n}$ and left precision $\mat{V} \in \Reals^{m \times m}$:
	{\begin{equation}
	\mathcal{MN}(\mat{A}\mid \mat{M},\mat{Q},\mat{V}) = \tfrac{|\mat{V}|^{n/2}}{(2\pi)^{nm}|\mat{Q}|^{m/2}}\exp\left(-\frac{1}{2}\mathrm{tr}\left((\mat{A}-\mat{M})\Transp\mat{Q}^{-1}(\mat{A}-\mat{M})\mat{V}\right)\right) \label{eq:MNpdf}
	\end{equation}}%
\end{itemize}
To sample from the MN distribution, one may sample a matrix $\mat{X} \in \Reals^{n \times m}$ of i.i.d. $\N(0,1)$ random variables, and obtain $\mat{A}$ as $\mat{A} = \mat{M} + \mathrm{chol}(\mathcal{\mat{Q}})\,\mat{X}\,\mathrm{chol}(\mat{V})$, where $\mathrm{chol}$ denotes the Cholesky factor ($\mat{V} = \mathrm{chol}(\mat{V}) \, \mathrm{chol}(\mat{V})\Transp$).

\section{Eigenfunctions for Multi-Dimensional Spaces}

The eigenfunctions for a $d$-dimensional space with a rectangular domain $[-L_1,L_1] \times \dots \times [-L_d,L_d]$, used in Example~\ref{sec:mr} and Example~\ref{sec:energy}, are on the form
\begin{align}
	&\phi^{(j_1,\dots,j_d)}(x) = \prod_{k=1}^{d}\frac{1}{\sqrt{L_k}} \sin\left(\frac{\pi j_k (x_k+L_k)}{2L_k}\right) \qquad \text{with} \quad \lambda_{j_1,\dots,j_d} = \sum_{k=1}^{d} \left(\frac{\pi j_k}{2L_k}\right)^2.
\end{align}
Note how this for $d=1$ reduces to the univariate case presented in Section~\ref{sec:ex1}. For further details we refer to Section 4.2 in \citet{SS:2014}.
	
\section{Provided Matlab Software}

The following Matlab files are available via the first authors homepage:

{
	\footnotesize
\begin{tabularx}{\textwidth}{l l p{3.1cm}} 
	\toprule
	\bf{File} & \bf{Use} & \bf{Comments} \\
	\midrule
	\texttt{synthetic\_example\_1.m} & First synthetic example (including Figure~\ref{fig:simple}) & \\
	\texttt{synthetic\_example\_2.m} & Second synthetic example & \\
	\texttt{damper.m} & MR damper example & {\scriptsize For other results, see \cite{Mathworks:2015} }\\
	\texttt{energy\_forecast.m} & Energy consumption forecasting example & \\
	\texttt{iwishpdf.m} & Implements \eqref{eq:IWpdf} & \\
	\texttt{mvnpdf\_log.m} & Logarithm of normal distribution pdf & \\
	\texttt{systematic\_resampling.m} & Systematic resampling (Step~\ref{alg:CPFAS:resampling}, Algorithm~\ref{alg:CPFAS}) & \\
	\bottomrule
\end{tabularx}
}

All files are published under the GPL license.

\end{document}